\newtheorem{theorem}{Theorem}
\newtheorem{proposition}{Proposition}
\newtheorem{remark}{Remark}
\newtheorem{lemma}{Lemma}
\algrenewcommand{\algorithmicrequire}{\textbf{Input:}}
\algrenewcommand{\algorithmicensure}{\textbf{Output:}}
\begin{document}
\title{Learning-Augmented Perfectly Secure Collaborative Matrix Multiplication}

\author{
  \IEEEauthorblockN{Zixuan He\IEEEauthorrefmark{1}, M. Reza Deylam Salehi, Derya Malak\IEEEauthorrefmark{1}, Photios A. Stavrou\IEEEauthorrefmark{1}}
  \IEEEauthorblockA{\IEEEauthorrefmark{1}Communication Systems Department, EURECOM, Sophia-Antipolis, France\\
  \{\texttt{zixuan.he, reza.deylam-salehi, derya.malak,  fotios.stavrou}\}@eurecom.fr}
}

\maketitle

\begin{abstract}
This paper presents a perfectly secure matrix multiplication (PSMM) protocol for multiparty computation (MPC) of $\mathrm{A}^{\top}\mathrm{B}$ over finite fields. The proposed scheme guarantees correctness and information-theoretic privacy against threshold-bounded, semi-honest colluding agents, under explicit local storage constraints. Our scheme encodes submatrices as evaluations of sparse masking polynomials and combines coefficient alignment with Beaver-style randomness to ensure perfect secrecy. We demonstrate that any colluding set of parties below the security threshold observes uniformly random shares, and that the recovery threshold is optimal, matching existing information-theoretic limits. Building on this framework, we introduce a learning-augmented extension that integrates tensor-decomposition-based local block multiplication, capturing both classical and learned low-rank methods. We demonstrate that the proposed learning-based PSMM preserves privacy and recovery guarantees for MPC, while providing scalable computational efficiency gains (up to $80\%$) as the matrix dimensions grow.
\end{abstract}

\section{Introduction}
\label{sec:intro}
Matrix multiplication (MM) is a fundamental primitive in modern networked and distributed systems that underlies state estimation and control (e.g., Kalman filtering~\cite{simon:2006,stavrou:2018jstsp,stavrou:2024tac}, linear quadratic regulator~\cite{bertsekas:2005,stavrou:2021automatica,stavrou:2021tac2}), large-scale optimization~\cite{loshchilov2018large, smelyanskiy2005parallel}, and learning pipelines~\cite{strinati:2024,polyanskiy:2025isit}. In networked control systems (NCSs)~\cite{hespanha:2007,schlor:2021CDCmpc,darup:2021encrypted}, such computations increasingly involve multi-agent architectures~\cite{yuksel:2013} and shared computation infrastructures where multiple parties may contribute data, models, or computational resources while maintaining confidentiality requirements, which therefore makes security and privacy central concerns since semi-trusted agents may expose proprietary or safety-critical information. This motivates the investigation of exploiting \textit{information-theoretically secure} distributed MM protocols that (i) preserve perfect privacy against collusions, (ii) obey strict storage constraints, and (iii) achieve near-optimal recovery thresholds with minimal communication.

Classical secure multiparty computation (MPC) protocols such as the schemes of Shamir~\cite{Shamir:1979}, Blakley~\cite{blakley:1979}, and their extension to the BGW protocol~\cite{wigderson:1988BGW}, provide information-theoretic protection against collusion up to a threshold through polynomial-encoded functions. 
Subsequent triple-based and SPDZ-style systems~\cite{beaver:1991,keller:2020mpc, yang2025maliciously} improve the online phase but still retain high communication overhead for large-scale matrix operations. When directly applied to MM with per-worker storage constraints, these approaches typically require a larger number of agents to tolerate threshold collusion, resulting in excessive resource redundancy and processing latency.
Recent advances in coded computing~\cite{Yu:2017, Lee:2018tit, yu:2019pmlr, deyn0n2025, dutta2020, cartor2024secure, malak2024distributed, structured, deylam:2025graph} have shown that polynomial encodings can drastically reduce recovery thresholds in distributed linear algebra. 
In particular, \textit{polynomial sharing}~\cite{akbari:2021tit} achieves tight worker bounds for computing matrix polynomials under storage and resiliency constraints.
However, to the best of our knowledge, while existing frameworks may provide a unified coded MM primitive that guarantees \textit{perfect privacy}, \textit{bounded storage}, for distributed computation, they do \textit{not} incorporate intelligent or learning-based mechanisms such as low-rank decomposition or learned computation, to enhance local computational efficiency.

In this paper, we develop a perfectly secure matrix multiplication (PSMM) framework for multiparty computation, and establish guarantees for the correctness of reconstruction and information-theoretic privacy against collusion. Beyond security, we introduce, for the first time within this framework, a \textit{learning-augmented PSMM} (LA-PSMM) scheme that integrates data-driven low-rank structure into local computations to improve computational efficiency. Notably, this learning-based integration is shown to preserve the original information-theoretic privacy and recovery guarantees. The main contributions of this paper are listed as follows. 
\begin{itemize}
    \item We propose a finite-field multiparty protocol that achieves perfect privacy against threshold-bounded semi-honest collusions, while explicitly accounting for local storage constraints. The scheme encodes shared inputs as evaluations of sparse masking polynomials, where selected coefficients align with partitioned matrix blocks, and all remaining terms are masked using random coefficients inspired by Beaver’s triple construction (see eqs. \eqref{eq:polyA}, \eqref{eq:polyB}).
    \item We establish the correctness and privacy guarantees of the proposed protocol. In particular, we show that any colluding set of agents below the prescribed threshold observes polynomial evaluations that are uniformly random and statistically independent of the underlying secrets, thereby ensuring perfect information-theoretic privacy (see Lemma~\ref{lemma:masking}). Moreover, we prove that the number of agents required for exact recovery of all target blocks is sufficient and matches the optimum in coded computing~\cite{akbari:2021tit} (see Theorem~\ref{theorem:info-bound}, Proposition~\ref{prop:adv-dof}).
    \item We introduce a novel LA-PSMM framework that improves local computational efficiency without sacrificing security. In LA-PSMM, each agent replaces dense local block multiplication with a reduced-complexity tensor-decomposition approach. This framework couples modern learning-based decompositions (e.g., AlphaTensor~\cite{fawzi:2022matrixRL}) that are generalizations of classical approaches such as Strassen’s and Laderman's algorithm~\cite{strassen:1969,laderman:1976}. By exploiting learning-based decomposition, we enhance them through the integration of security and privacy, thereby enabling additional gains that benefit sparsity in local computation. We show that this operation preserves the polynomial support of local computations and thus does not affect the global recovery threshold, ensuring that privacy and correctness guarantees remain intact. Finally, numerical simulations validate LA-PSMM by demonstrating significant reductions in agent-side computational complexity compared with standard PSMM.
\end{itemize}




\textbf{Notation}: We denote the set of real numbers by $\mathbb{R}$. Different natural sets are defined as $\mathbb{N}\triangleq\{1,2,\ldots\}$, $\mathbb{N}_0\triangleq\{0,1,\ldots\}$, and $\mathbb{N}_j^N\triangleq\{j,\ldots,N\},~j\leq{N},~N\in\mathbb{N}$. 
We denote a finite field by $\mathbb{F}$ with $|\mathbb{F}|$ sufficiently large. Matrices are denoted by upright uppercase letters (e.g., $\mathrm{A}, \mathrm{B}$ and $\mathrm{A}^\top, \mathrm{B}^\top$ the transposes). For $k,m\in\mathbb{N}$ ($k<m$), we define the $k$ horizontal partition of a matrix $\mathrm{A}\in\mathbb{F}^{m\times m}$ by $\mathrm{A}=[\mathrm{A}_1\cdots\mathrm{A}_i\cdots\mathrm{A}_k]$, where $\mathrm{A}_i\in\mathbb{F}^{m\times(m/k)}$, $i\in\mathbb{N}_1^k$. We denote $\mathbf{0}$ a zero block of suitable size. For a matrix $\mathrm{A}\in\mathbb{F}^{m\times n}$, 
$\operatorname{vec}(\mathrm{A})\in\mathbb{F}^{m \cdot n)}$ 
denotes the column-wise vectorization of $\mathrm{A}$. The operator $\operatorname{mat}(\cdot)$ reshapes a vector in $\mathbb{F}^{m \cdot n)}$ into a matrix in $\mathbb{F}^{m\times n}$.

\section{Preliminaries}
\label{sec:pre}
We consider the multiparty computation setup as illustrated in Fig.~\ref{fig:ncs-mpc} in the NCS with a source plant, one controller (co-located with the actuator), and $N$ distributed computation agents, $N\in\mathbb{N}$. The agents are assumed to be connected by authenticated, private point-to-point channels with the controller and with the source plant, and to be semi-honest (passive adversary), with up to $t-1$ agents potentially colluding. The controller only learns the final computation outputs from the agents to apply its control law. 
\begin{figure}[h]
    \centering
    \includegraphics[width=\linewidth,height=3.5cm]{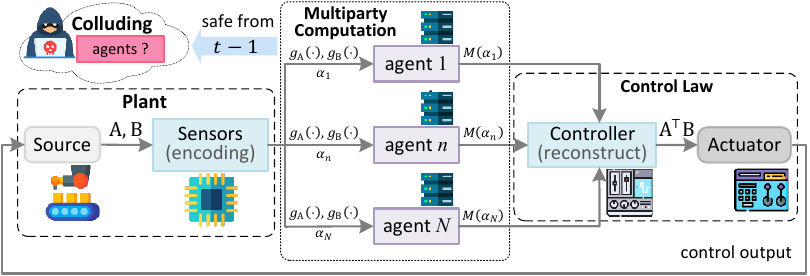}
    \caption{An NCS architecture incorporating a multiparty computation setup.}
    \label{fig:ncs-mpc}
\end{figure}

\textbf{Operation} (Protocol):
\textbf{(i)} \textit{Sharing}. The source plant generates two secret information matrices $\mathrm{A},\mathrm{B}\in\mathbb{F}^{m\times m}$ for some finite field $\mathbb{F}$ and dimension $m\in\mathbb{N}$, which need to be computed for multiplication aimed to be known by the controller. 
For example, these matrices $\mathrm{A},\mathrm{B}\in\mathbb{F}^{m\times m}$  can be seen as the correlation coefficient matrices of a multivariate time-invariant linear state-space model.
Due to the security concern, we use linear (additive) secret sharing, that is, Shamir sharing~\cite{Shamir:1979}, and Beaver's Triple~\cite{beaver:1991}. The secrets $\mathrm{A},\mathrm{B}$ will be partitioned into $k$ column blocks respectively
\begin{align}
    \mathrm{A}=[\mathrm{A}_1\cdots\mathrm{A}_j\cdots\mathrm{A}_k],~
    \mathrm{B}=[\mathrm{B}_1\cdots\mathrm{B}_j\cdots\mathrm{B}_k],
\label{eq:sharing-partition}
\end{align}
where $k\in\mathbb{N}, k<m$, and $\mathrm{A}_j,\mathrm{B}_j\in\mathbb{F}^{m\times(m/k)}$, $j\in\mathbb{N}_1^k$. Each partitioned sub-block of $\mathrm{A},\mathrm{B}$ is then encoded as the shares by leveraging a \textit{polynomial sharing} scheme similar to~\cite{akbari:2021tit}. This means that we encode the sub-blocks as coefficients of carefully designed sparse polynomials and distribute their evaluations as shares, which ensures that upon multiplying the encoded evaluations, the desired sub-block multiplications align in target coefficient positions, while random coefficients occupy all other positions to mask the secrets. Specifically, the sensor picks independent random masks so that each agent $n\in\mathbb{N}_1^N$ receives evaluations of two polynomials
\begin{align}
    g_\mathrm{A}(x) &= \sum_{j=1}^{k} \mathrm{A}_j x^{j-1} + \sum_{\ell=1}^{t-1} \mathrm{R}^{(\mathrm{A})}_\ell\, x^{k^2+\ell-1},
\label{eq:polyA}\\
    g_\mathrm{B}(x) &= \sum_{j=1}^{k} \mathrm{B}_j x^{k(j-1)} + \sum_{\ell=1}^{t-1} \mathrm{R}^{(\mathrm{B})}_l\, x^{k^2+\ell-1},
\label{eq:polyB}
\end{align}
at distinct public broadcasting points $\alpha_1,\ldots,\alpha_N\in\mathbb{F}$ to substitute $x$, where $\mathrm{R}^{(\mathrm{A})}_\ell,\mathrm{R}^{(\mathrm{B})}_\ell$ are i.i.d.\ uniform mask blocks \footnote{We view $\mathrm{R}^{(\mathrm{A})}_\ell,\mathrm{R}^{(\mathrm{B})}_\ell$ as $[a]$ and $[b]$ the components of a matrix-valued Beaver triple $([a],[b],[c]=[a]\times[b])$.} in $\mathbb{F}^{m\times (m/k)}$, and $g_\mathrm{A}, g_\mathrm{B}:\mathbb{F}^{m\times m}\rightarrow\mathbb{F}^{m\times m}$ denote polynomial functions. Hence, each agent $n\in\mathbb{N}_1^N$ stores  $g_\mathrm{A}(\alpha_n), g_\mathrm{B}(\alpha_n)$ that fit the storage budget. This placement mirrors the \textit{sparse} exponent pattern while ensuring coefficient alignment upon multiplication~\cite{akbari:2021tit}.\\
\noindent\textbf{(ii)} \textit{Local computation and occasional communication}. Each agent $n\in\mathbb{N}_1^N$ processes a local multiplication computation with the received $\alpha_n$ of the form
\begin{align}
    M(\alpha_n) = g_\mathrm{A}(\alpha_n)^{\top} g_\mathrm{B}(\alpha_n).
\label{eq:local-compute}
\end{align}
According to the design, by writing the multiplication computation in the summation form $M(x)=\sum_{\nu} \mathrm{M}_\nu x^\nu$, $\nu\in\mathbb{N}_0$, the coefficient $\mathrm{M}_\nu$ for any $i,j\in\mathbb{N}_1^k$ satisfies
\begin{align}
    \mathrm{M}_{i-1+k(j-1)} = \mathrm{A}_i^{\top}\mathrm{B}_j\, .
\label{eq:perfect-mul}
\end{align}
Therefore, all remaining coefficients include at least one random mask, which ensures privacy at the controller so that no exact information about $\mathrm{A}$ or $\mathrm{B}$ is carried. This coefficient-placement argument is standard in polynomial sharing~\cite{akbari:2021tit}. \\
\noindent \textbf{(iii)} \textit{Reconstruction of the result at the controller}. Each agent linearly recombines its available evaluation $M(\alpha_n)$ into polynomial shares of the target blocks, and transmits compact shares to the controller through the private channel. The controller then interpolates the needed coefficients to recover the target matrix. In particular, one multiplication requires only $N$ evaluations sufficient to interpolate all nonzero coefficients of $M(x)$. The degree and sparsity details are provided in the next section.

\textit{Objective}: The security goal of the considered system is to ensure \textit{correctness}, preserve \textit{privacy} against any $t-1$ colluding agents, and guarantee \textit{perfect information-theoretic privacy}, meaning that no additional information about the secrets is leaked to the controller beyond the agents’ outputs.

\section{Main Results}
\label{sec:main}
We present our main results in this section. We first give a lemma to show that privacy is maintained for the agent shares.

\begin{lemma}[Masking lemma]
\label{lemma:masking}
For any $p,q\in\mathbb{N}$, let $g(x)=\sum_{\ell=1}^{t-1} \mathrm{M}_\ell x^{\ell-1}$ with i.i.d.\ uniform coefficients $\mathrm{M}_\ell$ over $\mathbb{F}^{p\times q}$. For any distinct $\beta_1,\ldots,\beta_{t-1}\in\mathbb{F}$, the tuple $\big(g(\beta_1),\ldots,g(\beta_{t-1})\big)$ is uniformly distributed over $(\mathbb{F}^{p\times q})^{t-1}$ and independent of any other variables. 
\end{lemma}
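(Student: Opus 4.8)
The plan is to recognize that evaluating $g$ at the $t-1$ distinct points is nothing but an invertible linear map applied to the uniform coefficient tuple, and then to invoke the elementary fact that a bijection of a finite probability space carries the uniform law to the uniform law. This is the same Vandermonde mechanism that underlies Shamir sharing, now lifted to matrix-valued coefficients.

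First I would write the evaluations explicitly: for each $s\in\mathbb{N}_1^{t-1}$ we have $g(\beta_s)=\sum_{\ell=1}^{t-1}\beta_s^{\ell-1}\,\mathrm{M}_\ell$, so the map sending the coefficient tuple $(\mathrm{M}_1,\ldots,\mathrm{M}_{t-1})$ to the evaluation tuple $(g(\beta_1),\ldots,g(\beta_{t-1}))$ is governed by the $(t-1)\times(t-1)$ Vandermonde matrix $V=(\beta_s^{\ell-1})_{s,\ell}$. Since $\beta_1,\ldots,\beta_{t-1}$ are distinct, its determinant $\det V=\prod_{1\le i<j\le t-1}(\beta_j-\beta_i)$ is a product of nonzero field elements, hence $\det V\neq 0$ and $V$ is invertible over $\mathbb{F}$ (which is legitimate because $|\mathbb{F}|$ is taken sufficiently large to host the distinct points).

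Next I would lift this scalar statement to the matrix-valued setting. Because the same scalars $\beta_s^{\ell-1}$ multiply every entry of $\mathrm{M}_\ell$, the global map on $(\mathbb{F}^{p\times q})^{t-1}$ factors coordinatewise: fixing an entry position $(a,b)$, the scalar vector $\big((\mathrm{M}_1)_{a,b},\ldots,(\mathrm{M}_{t-1})_{a,b}\big)$ is sent by $V$ to $\big((g(\beta_1))_{a,b},\ldots,(g(\beta_{t-1}))_{a,b}\big)$. Equivalently the map is $V\otimes I_{pq}$, which is invertible precisely when $V$ is. A bijection of a finite set sends the uniform distribution to the uniform distribution; since $(\mathrm{M}_1,\ldots,\mathrm{M}_{t-1})$ is uniform over $(\mathbb{F}^{p\times q})^{t-1}$ by hypothesis, its image $(g(\beta_1),\ldots,g(\beta_{t-1}))$ is uniform over the same set.

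The independence claim then comes for free: the evaluation tuple is a deterministic function of the masks $\mathrm{M}_1,\ldots,\mathrm{M}_{t-1}$ alone, and by construction these masks are drawn independently of all other system variables (the secret blocks and the complementary mask family). Using the standard fact that if $X$ is independent of $Z$ and $Y=f(X)$ then $Y$ is independent of $Z$, the evaluation tuple inherits independence from its generating masks. I expect essentially no hard obstacle here; the only point deserving care is the matrix-valued lifting, namely checking that evaluation never mixes distinct entry coordinates, so that scalar Vandermonde invertibility is genuinely sufficient. The two remaining ingredients, distinctness implying invertibility and uniform-in/uniform-out under a bijection, are entirely standard.
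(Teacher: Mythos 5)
Your proposal is correct and follows essentially the same route as the paper's proof: the paper invokes Lagrange interpolation to assert that evaluation at $t-1$ distinct points is a bijection between coefficients and evaluations, which is exactly your Vandermonde-invertibility argument, and both conclude via uniform-in/uniform-out plus the fact that a deterministic function of variables independent of the rest stays independent. Your version merely spells out the entrywise lifting ($V\otimes I_{pq}$) that the paper leaves implicit.
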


\begin{proof}
\label{lemma-proof:masking}
By Lagrange interpolation, evaluation at $t-1$ distinct points is a bijection between coefficients and evaluations, and uniform coefficients imply uniform evaluations~\cite{Shamir:1979}. Hence, no information on other variables leaks through these masked evaluations.
\end{proof}

We next show the correctness maintained in the secret sharing in~\eqref{eq:polyA} and~\eqref{eq:polyB} during the agent's local computation. 
\begin{proposition}[Correctness of triple-based multiplication]
\label{prop:beaver}
For any $a,b,c\in\mathbb{N}$, let $\mathrm{A} \in \mathbb{F}^{a\times b}$ and $\mathrm{B}\in \mathbb{F}^{b\times c}$ be two secret matrices, and let $[\mathrm{A}],[\mathrm{B}]$ denote any two of their linear secret shares among the shared parties. 
Assume access to a Beaver triple $(\mathrm{R}_1,\mathrm{R}_2,\mathrm{R}_3)$ related to $[\mathrm{A}],[\mathrm{B}]$ with random matrices $\mathrm{R}_1\in \mathbb{F}^{a\times b}$, $\mathrm{R}_2\in \mathbb{F}^{b\times c}$ and $\mathrm{R}_3 = \mathrm{R}_1\mathrm{R}_2 \in \mathbb{F}^{a\times c}$. 
Define the \textit{opened} (i.e., publicly reconstructed) matrices
\begin{align}
    \mathrm{D} = \mathrm{A} - \mathrm{R}_1,~\text{and }~\mathrm{E}=\mathrm{B} - \mathrm{R}_2
\label{eq:beaver-open}
\end{align}
with the same respective dimensions. 
Then the recombined value $\mathrm{C}$ satisfies
\begin{align}
    \mathrm{C}
    = \mathrm{R}_3 + \mathrm{D}\mathrm{R}_2 + \mathrm{R}_1\mathrm{E} + \mathrm{D}\mathrm{E}
    = \mathrm{A}\mathrm{B} \in \mathbb{F}^{a\times c}.
\label{eq:beaver-compute}
\end{align}
Moreover, each term in $\mathrm{C}$ can be computed from local shares and the publicly opened $\mathrm{D,E}$ without revealing $\mathrm{A}$ or $\mathrm{B}$.
\end{proposition}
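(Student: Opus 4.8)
The plan is to establish \eqref{eq:beaver-compute} by direct substitution and then dispatch the two operational claims (local computability and non-leakage) separately. The core assertion is a purely algebraic identity, so the real substance lies in handling the \emph{noncommutativity} of matrix multiplication correctly: unlike the scalar Beaver triple, every product must retain its left/right factors in the prescribed order, and the cancellation succeeds precisely because the expression \eqref{eq:beaver-compute} is arranged to preserve that ordering. I would emphasize this at the outset so that the reader sees the placement of $\mathrm{D}\mathrm{R}_2$ (with $\mathrm{D}$ on the left) versus $\mathrm{R}_1\mathrm{E}$ (with $\mathrm{R}_1$ on the left) as deliberate rather than incidental.

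First I would substitute $\mathrm{R}_3 = \mathrm{R}_1\mathrm{R}_2$, $\mathrm{D} = \mathrm{A} - \mathrm{R}_1$, and $\mathrm{E} = \mathrm{B} - \mathrm{R}_2$ into the right-hand side of \eqref{eq:beaver-compute} and expand each of the four products while respecting multiplication order. Expanding $(\mathrm{A}-\mathrm{R}_1)\mathrm{R}_2$, $\mathrm{R}_1(\mathrm{B}-\mathrm{R}_2)$, and $(\mathrm{A}-\mathrm{R}_1)(\mathrm{B}-\mathrm{R}_2)$ produces the cross terms $\mathrm{A}\mathrm{R}_2$, $\mathrm{R}_1\mathrm{B}$, and repeated copies of $\mathrm{R}_1\mathrm{R}_2$. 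Collecting coefficients term by term, the two $\mathrm{A}\mathrm{R}_2$ contributions cancel, the two $\mathrm{R}_1\mathrm{B}$ contributions cancel, the four signed copies of $\mathrm{R}_1\mathrm{R}_2$ sum to zero, and only $\mathrm{A}\mathrm{B}$ survives, giving the first equality.

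For the local-computability claim, I would note that $\mathrm{D}$ and $\mathrm{E}$ are publicly opened and hence known to every party, so $\mathrm{D}\mathrm{E}$ is a public constant. The remaining three terms are either a share already held by the parties ($\mathrm{R}_3$, supplied by the triple) or a fixed public matrix multiplied against a share ($\mathrm{D}\mathrm{R}_2$ and $\mathrm{R}_1\mathrm{E}$); by the linearity of the secret-sharing scheme, each such product and their sum with the public constant yields a valid share $[\mathrm{C}]$, formed without any interaction beyond the single opening of $\mathrm{D}$ and $\mathrm{E}$.

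Finally, for non-leakage I would invoke the masking principle underlying Lemma~\ref{lemma:masking}: since $\mathrm{R}_1$ and $\mathrm{R}_2$ are independent uniform masks, each opened matrix $\mathrm{D} = \mathrm{A} - \mathrm{R}_1$ and $\mathrm{E} = \mathrm{B} - \mathrm{R}_2$ acts as a one-time pad and is therefore uniformly distributed over its support and statistically independent of $\mathrm{A}$ and $\mathrm{B}$. Consequently, revealing $\mathrm{D}$ and $\mathrm{E}$ discloses nothing about the secrets, which completes the proposition. The only delicate point throughout is the bookkeeping of noncommutative products in the expansion; once the factor ordering is tracked consistently, the remainder is routine.
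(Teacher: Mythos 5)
Your proposal is correct and takes essentially the same approach as the paper: a direct algebraic verification of the identity (the paper expands $\mathrm{A}\mathrm{B}=(\mathrm{R}_1+\mathrm{D})(\mathrm{R}_2+\mathrm{E})$ forward, while you expand the right-hand side and cancel, which is the same computation in reverse), followed by the standard one-time-pad observation that the opened $\mathrm{D},\mathrm{E}$ are uniformly masked. Your additional remarks on noncommutative factor ordering and on local computability of each term are sound elaborations of what the paper leaves implicit.
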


\begin{proof}
See Appendix~\ref{app:prop-proof:beaver}.
\end{proof}

The following result shows the information-theoretic privacy of the considered secret sharing scheme. 

\begin{theorem}[Agent bound for secure MM]
\label{theorem:info-bound}
Let $\mathrm{\mathrm{A,B}}\in \mathbb{F}^{m\times m}$ be $k$ partitioned respectively as
\begin{align}
    \mathrm{A}=\big[\mathrm{A}_1\cdots\mathrm{A}_i\cdots\mathrm{A}_k\big],~
    \mathrm{B}=\big[\mathrm{B}_1\cdots\mathrm{B}_j\cdots\mathrm{B}_k\big],
\label{eq:theo}
\end{align}
where $\mathrm{A}_i,\mathrm{B}_j \in \mathbb{F}^{m\times (m/k)}$, $i,j\in\mathbb{N}^k$. Assume each agent stores at most a $1/k$ fraction of each input (one $m\times (m/k)$ block per matrix), and up to $t-1$ agents may collude. There exists distinct encoding points $\alpha_1,\ldots,\alpha_N\in\mathbb{F}$ and encoders
\begin{align}
    g_\mathrm{A}(x) &= \sum_{i=1}^k \mathrm{A}_i\,x^{i-1} + \sum_{\ell=1}^{t-1} \mathrm{R}^{(\mathrm{A})}_\ell\,x^{k^2+\ell-1},
\label{eq:theo-polyA}\\
    g_\mathrm{B}(x) &= \sum_{j=1}^k \mathrm{B}_j\,x^{k(j-1)} + \sum_{\ell=1}^{t-1} \mathrm{R}^{(\mathrm{B})}_\ell\,x^{k^2+\ell-1}, 
\label{eq:theo-polyB}
\end{align}
with i.i.d.\ random masks $\mathrm{R}^{(\mathrm{A})}_\ell,\mathrm{R}^{(\mathrm{B})}_\ell\in  \mathbb{F}^{m\times (m/k)}$, such that the protocol that sends evaluations $g_\mathrm{A}(\alpha_n),g_\mathrm{B}(\alpha_n)$ to agent $n\in\mathbb{N}_1^N$ and aggregates $M(\alpha_n)  =  g_\mathrm{A}(\alpha_n)^\top g_\mathrm{B}(\alpha_n)$ to securely and perfectly reconstruct $\mathrm{A}^\top\mathrm{B}$ using $N$ agents, where
\begin{align}
   N \leq \min\big\{2k^2 + 2t - 3,\,  k^2 + kt + t - 2\,\big\}\ . 
\label{eq:info-bound}
\end{align}

\end{theorem}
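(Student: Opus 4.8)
The plan is to expand the product polynomial $M(x)=g_\mathrm{A}(x)^\top g_\mathrm{B}(x)$ using \eqref{eq:theo-polyA}--\eqref{eq:theo-polyB}, read off both the target coefficients and the exact set of exponents carrying nonzero coefficients, and then show that the recovery threshold equals the size of this support. Writing $M(x)=\sum_\nu \mathrm{M}_\nu x^\nu$, the expansion splits into four families according to whether each factor contributes a data block or a mask block: data--data $\mathrm{A}_i^\top\mathrm{B}_j$ at exponent $i-1+k(j-1)$; data--mask $\mathrm{A}_i^\top\mathrm{R}^{(\mathrm{B})}_{\ell'}$ at exponent $k^2+(i-1)+(\ell'-1)$; mask--data $(\mathrm{R}^{(\mathrm{A})}_\ell)^\top\mathrm{B}_j$ at exponent $k^2+(\ell-1)+k(j-1)$; and mask--mask at exponent $2k^2+(\ell-1)+(\ell'-1)$. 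The first family is the crux: as $(i,j)$ ranges over $\mathbb{N}_1^k\times\mathbb{N}_1^k$, the map $(i,j)\mapsto(i-1)+k(j-1)$ is the base-$k$ encoding, hence a bijection onto $\{0,\ldots,k^2-1\}$, and no mask term reaches an exponent below $k^2$. This is exactly \eqref{eq:perfect-mul}, so recovering $\mathrm{M}_0,\ldots,\mathrm{M}_{k^2-1}$ returns every block $\mathrm{A}_i^\top\mathrm{B}_j$ of $\mathrm{A}^\top\mathrm{B}$, while every remaining coefficient carries at least one uniform mask and is therefore uninformative at the controller by Lemma~\ref{lemma:masking}.

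I would then bound $N$ by two interpolation strategies, whose minimum gives \eqref{eq:info-bound}. Since the top mask--mask term attains $\deg M=2k^2+2t-4$, a standard Vandermonde argument at $\deg M+1=2k^2+2t-3$ distinct nodes recovers all coefficients unconditionally; this yields the first bound. For the second, I would exploit sparsity: a polynomial supported on an exponent set $E$ is recoverable from $|E|$ evaluations whenever the associated $|E|\times|E|$ generalized Vandermonde matrix is nonsingular. Because its determinant factors as the ordinary Vandermonde times a Schur polynomial in the nodes, it is a nonzero polynomial for distinct exponents, so for $|\mathbb{F}|$ large there exist distinct $\alpha_1,\ldots,\alpha_N$ making it invertible --- precisely the existence claim of the theorem. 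The effective threshold is thus $\min\{2k^2+2t-3,\,|E|\}$, and it remains to count $|E|$.

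The technical heart is this support count, and the main obstacle is bookkeeping whether the middle exponent bands interleave or merge rather than any conceptual difficulty. The data--data band occupies $\{0,\ldots,k^2-1\}$ and is disjoint from the rest; shifting the other three families by $k^2$, the data--mask band fills the contiguous interval $\{0,\ldots,k+t-3\}$, the mask--data band is the staggered union $\bigcup_{v=0}^{k-1}\{kv,\ldots,kv+t-2\}$, and the mask--mask band fills $\{k^2,\ldots,k^2+2t-4\}$. The mask--data intervals (each of length $t-1$, spaced $k$ apart) are disjoint when $t\le k+1$ and overlap when $t\ge k+2$, which is exactly the dichotomy behind \eqref{eq:info-bound}. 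For $t\le k+1$ the three shifted bands contribute $kt-t+1$ exponents below $k^2$ together with the $2t-3$ disjoint mask--mask exponents, giving $|E|=k^2+(kt-t+1)+(2t-3)=k^2+kt+t-2$. For $t\ge k+2$ the mask--data intervals merge into $\{0,\ldots,k^2-k+t-2\}$, which swallows the data--mask band and reaches past $k^2$ to abut the mask--mask band, so the total support fills $\{0,\ldots,2k^2+2t-4\}$ contiguously, of size $2k^2+2t-3=\deg M+1$. The two counts coincide at the boundary $t=k+1$, where $(k-1)(k+1-t)=0$, so in every case the number of evaluations needed is $\min\{2k^2+2t-3,\,k^2+kt+t-2\}$, matching the coded-computing optimum of~\cite{akbari:2021tit} and establishing \eqref{eq:info-bound}.
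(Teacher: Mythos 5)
Your proposal follows essentially the same route as the paper's proof: the same four exponent families (the paper's $\mathcal{K}_1$--$\mathcal{K}_4$), the same observation that the data--data band bijects onto $\{0,\ldots,k^2-1\}$ with every other exponent carrying a mask, and the same reduction of the threshold to the support size of $M(x)$ via a (generalized) Vandermonde interpolation argument. In fact you go further than the paper on the two points it leaves implicit: you carry out the overlap count explicitly (verifying $kt-t+1$ middle-band exponents plus $2t-3$ disjoint mask--mask exponents when $t\le k+1$, and the contiguous fill of size $2k^2+2t-3$ when $t\ge k+2$, with agreement at $t=k+1$), and you justify nonsingularity of the sparse interpolation matrix via the Schur-polynomial factorization of the generalized Vandermonde determinant rather than the paper's appeal to Zariski-openness and Schwartz--Zippel; both of your counts and the boundary check are correct. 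The one piece you omit is the privacy claim against the $t-1$ colluding \emph{agents}: the theorem asserts secure reconstruction, and the paper closes with a separate argument that the coalition's view $\{g_\mathrm{A}(\alpha_n),g_\mathrm{B}(\alpha_n):n\in\mathcal{S}\}$, $|\mathcal{S}|\le t-1$, is uniform and independent of $(\mathrm{A},\mathrm{B})$ because each share contains $t-1$ evaluations of the degree-$(t-2)$ mask polynomials $r_\mathrm{A},r_\mathrm{B}$ (Lemma~\ref{lemma:masking}); you only address what the controller learns from the reconstructed coefficients. Adding that one application of the masking lemma to the agents' shares would make your argument complete.
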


\begin{proof}
See Appendix~\ref{app:theorem-proof:info-bound}.
\end{proof}

\begin{remark}[Positioning relative to polynomial sharing]
\label{rem-position-pol-shar}
The agent threshold (bound) in Theorem~\ref{theorem:info-bound} provides evaluations sufficient to interpolate all nonzero coefficients of $M(x)=g_\mathrm{A}(x)^\top g_\mathrm{B}(x)$ and hence obtain $\mathrm{A}^\top \mathrm{B}$ from every recovered multiplication $\mathrm{A}_i^\top\mathrm{B}_j$, $i,j\in\mathbb{N}^k$, and it coincides with the optimal recovery bounds known for general polynomial sharing schemes~\cite{akbari:2021tit}. However, our result is obtained under a more restrictive and structured model: we focus exclusively on MM, enforce a zero-gap alignment of all useful coefficients, and require closure under composition for subsequent multiplications. Achieving the same threshold under these additional constraints is nontrivial and essential for MPC-compatible and tensor-augmented secure computation.
\end{remark}

\begin{remark}[Improvement via polynomial sharing]
\label{rem-imp-soa}
The bound in Theorem~\ref{theorem:info-bound} strictly improves over naive job-splitting + BGW~\cite{ben:2019BGW}, which requires $k^2(2t-1)$ agents for one multiplication. Our scheme keeps $N$ independent of the number of chained multiplications when results are immediately reshared (\textit{closure} under polynomial sharing), as observed in~\cite{akbari:2021tit}.
\end{remark}

We next highlight the importance of our approach from the state-of-the-art through the following proposition.

\begin{proposition}[Advantage under reduced degrees-of-freedom]
\label{prop:adv-dof}
Consider the PSMM encoding of Theorem~\ref{theorem:info-bound} with storage fraction $1/k$ and privacy against $t-1$ colluding agents.
Let the target block multiplications be
\begin{align}
    \mathrm{Z}_{i,j} \triangleq \mathrm{A}_i^\top \mathrm{B}_j \in \mathbb{F}^{(m/k)\times(m/k)},~ i,j\in\mathbb{N}_1^k.
\label{eq:taget-block}
\end{align}
Assume the input class satisfies a known structural constraint such that the collection
\begin{align}
    \mathcal{Z} \triangleq \{\mathrm{Z}_{i,j}:~i,j\in\mathbb{N}_1^k\}
\label{eq:taget-blocks}
\end{align}
admits at most $s$ degrees-of-freedom (DOF), in the following sense:

\noindent\textbf{DOF assumption}: There exist known linear maps
$\mathcal{L}_1,\dots,\mathcal{L}_s$, for $s\in\mathbb{N}$, and known coefficient tensors
$\{\Gamma^{(i,j)}_\ell:l\in\mathbb{N}_1^s\}$ such that for all admissible inputs $(\mathrm{A,B})$,
\begin{equation}
    \mathrm{Z}_{i,j} \;=\; \sum_{\ell=1}^{s} \Gamma^{(i,j)}_\ell \, \mathcal{L}_\ell(\mathrm{A,B}), ~ \forall\, i,j\in\mathbb{N}_1^k.
\label{eq:dof-model}
\end{equation}
Equivalently, $\{\mathrm{Z}_{i,j}\}$ lies in a known $s$-dimensional linear subspace of
$\big(\mathbb{F}^{(m/k)\times(m/k)}\big)^{k^2}$.
Then the number of agent evaluations required to recover $\mathrm{A}^\top\mathrm{B}$ (and hence all needed $\mathrm{Z}_{i,j}$) with perfect privacy can be reduced from
\begin{align}
    N^\star(k,t)=\min\{\,2k^2+2t-3,\;k^2+kt+t-2\,\}\, ,
\end{align}
to
\begin{equation}
    N_{\text{struct}}(k,t,s) \;\leq\; \min\{\,2s+2t-3,\; s+ks+t-2\,\},
\label{eq:adv-newN}
\end{equation}
by (i) keeping the same PSMM masking/tail degree for $t$-privacy and
(ii) decoding only the $s$ effective DOF rather than all $k^2$ block coefficients.

In particular, for any fixed $N$, the tolerable collusion level satisfies
\begin{align}
    t_{\text{struct}} \;\ge\; t_{\text{worst}}\ ,
    \quad \forall s<k^2,
\label{eq:tolerable-collusion}
\end{align}
i.e., reduced DOF strictly increases collusion tolerance (or equivalently reduces the required number of agents).
\end{proposition}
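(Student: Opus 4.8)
The plan is to reduce the reconstruction problem to Theorem~\ref{theorem:info-bound} by treating the $s$ effective DOF as the new collection of ``useful'' coefficients, and then to re-run the degree/support counting of that theorem with $k^2$ replaced by $s$. First I would invoke the DOF assumption \eqref{eq:dof-model} to argue that the recovery targets are equivalent: since the linear maps $\mathcal{L}_1,\dots,\mathcal{L}_s$ and the coefficient tensors $\Gamma^{(i,j)}_\ell$ are public, once the controller learns the $s$ effective quantities $\mathcal{L}_1(\mathrm{A,B}),\dots,\mathcal{L}_s(\mathrm{A,B})$ it can synthesize every block $\mathrm{Z}_{i,j}=\mathrm{A}_i^\top\mathrm{B}_j$, hence all of $\mathrm{A}^\top\mathrm{B}$, by applying the known maps. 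Thus it suffices to design the sharing so that these $s$ quantities, rather than all $k^2$ block products, are the coefficients to be interpolated.

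Next I would construct reduced encoders $\hat g_\mathrm{A},\hat g_\mathrm{B}$ of the same form as \eqref{eq:theo-polyA}--\eqref{eq:theo-polyB}, but with the data part carrying the linear forms feeding $\mathcal{L}_1,\dots,\mathcal{L}_s$ laid out on a sparse, mixed-radix exponent pattern that aligns these $s$ forms into $s$ distinct target coefficients of $\hat M(x)=\hat g_\mathrm{A}(x)^\top\hat g_\mathrm{B}(x)$, with all cross terms kept off the target positions, while appending the same tail of $t-1$ i.i.d.\ uniform mask blocks to each encoder. The privacy claim \eqref{eq:tolerable-collusion} is then inherited directly from Lemma~\ref{lemma:masking}: because each encoder still carries $t-1$ independent uniform masks on its tail, any $t-1$ colluding agents observe evaluations that are uniform and independent of $(\mathrm{A,B})$, so $t$-privacy is untouched by shrinking the useful-slot count.

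I would then count evaluations in two ways, exactly mirroring Theorem~\ref{theorem:info-bound}. The degree bound is immediate: both $\hat g_\mathrm{A},\hat g_\mathrm{B}$ have degree $s+t-2$ (useful exponents in $\{0,\dots,s-1\}$, tail in $\{s,\dots,s+t-2\}$), so $\deg\hat M=2s+2t-4$ and dense interpolation needs $2s+2t-3$ points. The sparse bound follows by recounting the distinct monomials of $\hat M$ over the data$\times$data, data$\times$tail, tail$\times$data, and tail$\times$tail contributions, which is the analogue of the support computation behind \eqref{eq:info-bound} with $k^2\mapsto s$, and yields $s+ks+t-2$; taking the minimum gives \eqref{eq:adv-newN}. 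Finally, for fixed $N$ I would invert the bound: $N_{\text{struct}}(k,t,s)$ is nondecreasing in $t$ and, for $s<k^2$, dominated by $N^\star(k,t)$, so the set of admissible $t$ under an agent budget $N$ can only grow, which is precisely $t_{\text{struct}}\ge t_{\text{worst}}$ in \eqref{eq:tolerable-collusion}.

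I expect the main obstacle to be the encoder construction in the second step, namely proving that one can always place the $s$ effective DOF into $s$ collision-free, pre-aligned coefficient slots so that no cross term or mask term overwrites a target coefficient, while simultaneously preserving closure under re-sharing for chained multiplications. Verifying the sparse-support count $s+ks+t-2$ for the resulting exponent layout (the mixed-radix argument underlying Theorem~\ref{theorem:info-bound}) is routine but must be checked in the regime where this bound is the active one; by contrast, the degree bound, the privacy inheritance from Lemma~\ref{lemma:masking}, and the final monotonicity/inversion step are all immediate.
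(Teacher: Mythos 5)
Your route differs from the paper's in a substantive way. The paper's proof keeps the \emph{original} encoders $g_\mathrm{A},g_\mathrm{B}$ of \eqref{eq:theo-polyA}--\eqref{eq:theo-polyB} untouched, so the product polynomial $M(x)$ and its support are exactly those of Theorem~\ref{theorem:info-bound}; the saving comes entirely at the decoder, by observing that the $k^2$ coefficients in the band $\mathcal{K}_1$ are known linear functions of only $s$ latent blocks $\mathcal{L}_\ell(\mathrm{A,B})$, so the blockwise Vandermonde system has fewer unknowns and correspondingly fewer evaluations suffice. Privacy is then immediate because the agents' shares are literally unchanged. You instead redesign the encoders $\hat g_\mathrm{A},\hat g_\mathrm{B}$ so that the data band itself is compressed from $k^2$ slots to $s$ slots, which (if it worked) would more transparently justify the counts $2s+2t-3$ and $s+ks+t-2$, since the paper's own ``replace $k^2$ by $s$ in the support count'' step is harder to square with an unchanged mask/cross-term band.

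The genuine gap is the one you flag as the ``main obstacle'' but do not resolve, and it is not a routine verification: the DOF assumption \eqref{eq:dof-model} only says each latent quantity $\mathcal{L}_\ell(\mathrm{A,B})$ is a known linear combination of the bilinear products $\mathrm{A}_i^\top\mathrm{B}_j$, i.e.\ a bilinear form $\sum_{i,j}c^{(\ell)}_{ij}\mathrm{A}_i^\top\mathrm{B}_j$ with an arbitrary coefficient matrix $c^{(\ell)}$. For such a form to appear as a \emph{single} coefficient of a product $\hat g_\mathrm{A}(x)^\top\hat g_\mathrm{B}(x)$ with $\hat g_\mathrm{A},\hat g_\mathrm{B}$ linear encodings of $\mathrm{A}$ and $\mathrm{B}$ respectively, that coefficient would have to collect terms of the form $\bigl(\sum_i u_i\mathrm{A}_i\bigr)^\top\bigl(\sum_j v_j\mathrm{B}_j\bigr)$, i.e.\ each target slot realizes only rank-one coefficient patterns $c_{ij}=u_iv_j$ (or sums over the exponent collisions you are trying to avoid). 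A general $s$-dimensional subspace of $\bigl(\mathbb{F}^{(m/k)\times(m/k)}\bigr)^{k^2}$ need not admit a basis of rank-one bilinear forms that can simultaneously be placed collision-free, so your second step constructs an object whose existence is not established and can fail. The paper's decoder-side argument sidesteps this entirely (the latent blocks never need to be realized as individual polynomial coefficients), at the cost of a less convincing support count for the nuisance band. Your degree bound, privacy inheritance via Lemma~\ref{lemma:masking}, and the monotonicity step for \eqref{eq:tolerable-collusion} are all fine conditional on the construction, but as written the proof does not go through without it.
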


\begin{proof}
    See Appendix~\ref{App:Proof-prop:adv-dof}.
\end{proof}

\section{Learning-Augmented Perfectly Secure Matrix Multiplication (LA-PSMM) Design}
\label{sec:learning-mul}
This section extends our PSMM protocol by incorporating a \textit{learning-based optimization layer} inspired by AlphaTensor~\cite{fawzi:2022matrixRL}. The goal is to jointly achieve algorithmic efficiency in the local multiplications via reinforcement learning (RL) while maintaining perfect information-theoretic security under the proposed PSMM framework. The resulting protocol, termed \textbf{Learning-Augmented PSMM (LA-PSMM)}, unifies our secret sharing and coded-computation guarantees with a data-driven discovery of efficient multiplication schemes.

\subsection{Motivation and Overview}
\label{subsec:learning-overview}
While PSMM ensures privacy and optimal agent thresholds, each agent is still required to perform a dense local MM
$M(\alpha_n) = g_\mathrm{A}(\alpha_n)^\top g_\mathrm{B}(\alpha_n)$, whose arithmetic complexity scales as $\mathcal{O}\big((\frac{m}{k})^3\big)$. Recent advances, such as AlphaTensor~\cite{fawzi:2022matrixRL}, demonstrate that deep RL can identify low-rank decompositions of the MM tensor, thereby reducing the number of scalar multiplications.

The LA-PSMM design equips each agent with an RL-discovered \textit{tensor decomposition pattern} that replaces the naive dense multiplication with a sequence of bilinear forms, while maintaining perfect secrecy through our masking and coded sharing mechanism (cf. Proposition~\ref{prop:agent-bound}).

\subsection{RL-Enhanced Block Multiplication}
\label{subsec:rl-mul}
Let each agent receive its encoded inputs $g_\mathrm{A}(\alpha_n),g_\mathrm{B}(\alpha_n)\in\mathbb{F}^{m\times(m/k)}$. AlphaTensor represents MM as a 3-way tensor that maps the bilinear form vectorized inputs to the output via 
\begin{align}
    \operatorname{vec}(\mathrm{C}) = \sum_{r=1}^{T} u_r^\top \operatorname{vec}(\mathrm{A})v_r^\top\operatorname{vec}(\mathrm{B})w_r,
\label{eq:tensorform}
\end{align}
where $T\in\mathbb{N}$ is the tensor rank (number of scalar multiplications), and $(u_r,v_r,w_r)$ are the low-rank decompositions RL agent searches for to minimize $T$.

We adapt this formulation to our finite-field setting.
Each agent holds masked inputs and executes only the local operations corresponding to its learned decomposition pattern $\{u_r,v_r,w_r\}_{r=1}^{T_l}$ with a learned rank $T_l$. Formally, for agent $n$ we define
\begin{align}
    &M(\alpha_n) \nonumber\\
    &\triangleq \sum_{r=1}^{T_l}\Big(\langle u_r, \operatorname{vec}(g_\mathrm{A}(\alpha_n))\rangle \cdot\langle v_r, \operatorname{vec}(g_\mathrm{B}(\alpha_n))\rangle\Big) \cdot \operatorname{mat}(w_r),
\label{eq:agent-learned}
\end{align}
where $\operatorname{mat}(\cdot)$ reshapes the vector $w_r$ back into a $(m/k)\times(m/k)$ matrix block. Each agent thus performs fewer scalar multiplications, replacing the standard $(m/k)^3$ cost with $T_l\ll (m/k)^3$ while keeping the same algebraic structure required by PSMM.

\subsection{Algorithmic Integration}
\label{subsec:algorithm}
The LA-PSMM protocol preserves all privacy guarantees of PSMM with the only modification in the \textit{local multiplication} step as a learned low-rank decomposition instead of a dense matrix product. We present the procedure below.

\begin{proposition}[Agent bound under tensorized local multiplications]
\label{prop:agent-bound}
Let $\mathrm{\mathrm{A,B}}\in \mathbb{F}^{m\times m}$ be $k$ partitioned respectively as~\eqref{eq:theo}
with $\mathrm{A}_i,\mathrm{B}_j \in \mathbb{F}^{m\times (m/k)}$.
Consider the encoded polynomials~\eqref{eq:theo-polyA}-\eqref{eq:theo-polyB}
as in Theorem~\ref{theorem:info-bound}, and let $\alpha_1,\dots,\alpha_N \in \mathbb{F}$ be distinct public evaluation points.
Suppose that each agent $n\in\mathbb{N}_1^N$ does not form the dense product $g_\mathrm{A}(\alpha_n)^\top g_\mathrm{B}(\alpha_n)$ directly, but instead evaluates the same bilinear map via a rank-$T$ tensor decomposition, i.e.
\begin{align}
    &M(\alpha_n) \nonumber\\
    &= \sum_{r=1}^T
    \big\langle u_r, \operatorname{vec}(g_\mathrm{A}(\alpha_n)) \big\rangle
    \big\langle v_r, \operatorname{vec}(g_\mathrm{B}(\alpha_n)) \big\rangle
    \operatorname{vec}(w_r),
\label{eq:tensor-local}
\end{align}
for some fixed triplets $(u_r,v_r,w_r)$ over $\mathbb{F}$ that realize the usual MM on $(m/k)\times(m/k)$ blocks (e.g., Strassen or an AlphaTensor-empowered decomposition). Then, the resulting global product polynomial
$M(x) \triangleq g_\mathrm{A}(x)^\top g_\mathrm{B}(x)$ has the same set of nonzero exponents as in Theorem~\ref{theorem:info-bound}, and all desired block multiplications $\mathrm{A}_i^\top \mathrm{B}_j$ appear at exponents $i-1 + k(j-1)$, while every other exponent contains at least one random mask. Consequently, the number of agents $N$ required to interpolate $M(x)$ (entrywise) is unchanged and still satisfies \eqref{eq:info-bound}.
\end{proposition}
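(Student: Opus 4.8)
The plan is to exploit the single structural fact that a correct rank-$T$ tensor decomposition of matrix multiplication is an \emph{exact} bilinear identity over $\mathbb{F}$: by hypothesis the triplets $(u_r,v_r,w_r)_{r=1}^T$ realize the usual product on $(m/k)\times(m/k)$ blocks, so for \emph{every} pair of admissible matrices $\mathrm{X},\mathrm{Y}$ one has $\operatorname{mat}\big(\sum_{r=1}^T \langle u_r,\operatorname{vec}(\mathrm{X})\rangle\langle v_r,\operatorname{vec}(\mathrm{Y})\rangle \operatorname{vec}(w_r)\big)=\mathrm{X}^\top\mathrm{Y}$. Because this is an identity of functions and not merely a numerical coincidence at one input, substituting the particular matrices $\mathrm{X}=g_\mathrm{A}(\alpha_n)$ and $\mathrm{Y}=g_\mathrm{B}(\alpha_n)$ shows that the tensorized local output \eqref{eq:tensor-local} equals $g_\mathrm{A}(\alpha_n)^\top g_\mathrm{B}(\alpha_n)$, i.e., the dense local product of Theorem~\ref{theorem:info-bound}, for each agent $n\in\mathbb{N}_1^N$. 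The decomposition therefore changes only the arithmetic \emph{circuit} used to evaluate the block product, not the \emph{value} produced.

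First I would make this substitution explicit and record that the per-agent evaluations $M(\alpha_n)$ are literally unchanged from the dense scheme. Next, I would observe that the object $M(x)\triangleq g_\mathrm{A}(x)^\top g_\mathrm{B}(x)$ is a fixed polynomial in $x$ with matrix coefficients, defined independently of how any agent computes it; its nonzero-exponent support, the location of each useful coefficient $\mathrm{A}_i^\top\mathrm{B}_j$ at exponent $i-1+k(j-1)$ via \eqref{eq:perfect-mul}, and the appearance of at least one random mask in every other coefficient are all properties of this polynomial established in Theorem~\ref{theorem:info-bound}. Since the tensorized scheme reproduces the evaluations $\{M(\alpha_n)\}_{n=1}^N$ exactly, it interpolates the same polynomial, so the useful/masked coefficient placement and the recovery threshold are inherited verbatim, giving $N$ satisfying \eqref{eq:info-bound}. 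Finally, I would note that privacy is untouched: each colluding subset of size at most $t-1$ still observes only the shares $g_\mathrm{A}(\alpha_n),g_\mathrm{B}(\alpha_n)$, whose distribution is unaffected by the local multiplication method, so Lemma~\ref{lemma:masking} continues to apply.

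The hard part is not the algebra but the hypothesis underlying it: one must ensure the chosen decomposition is a genuine identity \emph{over the finite field} $\mathbb{F}$ and for the exact block dimensions of the local product, rather than over $\mathbb{R}$ or $\mathbb{Z}$ alone. Strassen's scheme uses only integer coefficients $\{-1,0,1\}$ and hence reduces modulo the characteristic of $\mathbb{F}$, while AlphaTensor-style decompositions must be selected among those whose coefficient ring embeds into $\mathbb{F}$; once such a decomposition is fixed, the coefficient-alignment argument of Theorem~\ref{theorem:info-bound} transfers without modification because it depends solely on the polynomial $g_\mathrm{A}(x)^\top g_\mathrm{B}(x)$ and not on the evaluation procedure. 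I would therefore state the field-validity of the triplets as a standing assumption and devote the remainder of the proof to the substitution-and-interpolation chain above.
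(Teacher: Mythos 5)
Your proof is correct and follows essentially the same route as the paper: the paper's (very terse) argument defers to Theorem~\ref{theorem:info-bound} and, in Lemma~\ref{lemma:Operator-invariance}, makes the same observation that an exact bilinear realization produces identical per-agent evaluations of the fixed polynomial $M(x)=g_\mathrm{A}(x)^\top g_\mathrm{B}(x)$, so both the coefficient placement and the interpolation threshold carry over unchanged. Your added caveat about the decomposition being a genuine identity over $\mathbb{F}$ (rather than over $\mathbb{R}$ or $\mathbb{Z}$) is a worthwhile precision, though it is already folded into the proposition's hypothesis that the triplets are ``over $\mathbb{F}$.''
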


\begin{proof}
\label{prop-proof:agent-bound}
The proof follows the structure of Theorem~\ref{theorem:info-bound}, but with the role of the local bilinear realization being explicit.
\end{proof}

The implementation of Proposition~\ref{prop:agent-bound} is illustrated in Algorithm~\ref{algo:la-psmm}, and we provide a lemma to show that the learning integration privacy-preserving property is still maintained.

\begin{lemma}[Operator choice does not affect privacy or recovery threshold]
\label{lemma:Operator-invariance}
For any available operator $M\in\mathcal{M}$ whose local computation implements the exact bilinear map
$M(\alpha)=g_\mathrm{A}(\alpha)^\top g_\mathrm{B}(\alpha)$ for all $\alpha\in\mathbb{F}$, the LA-PSMM protocol in Algorithm~\ref{algo:la-psmm} achieves the same
(i) perfect privacy against any $t-1$ colluding agents and
(ii) agent threshold $N^\star(k,t)$
as Theorem~\ref{theorem:info-bound}.
\end{lemma}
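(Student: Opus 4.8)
The plan is to establish a clean separation between the \emph{what} and the \emph{how} of the local computation: the privacy and recovery properties of the protocol depend only on the polynomial $M(x)=g_\mathrm{A}(x)^\top g_\mathrm{B}(x)$ as a function of $x$, never on the arithmetic procedure used to evaluate it at the points $\alpha_n$. First I would observe that, by hypothesis, every operator $M\in\mathcal{M}$ computes the \emph{exact} bilinear map, i.e.\ for all $\alpha\in\mathbb{F}$ we have the pointwise identity $M(\alpha)=g_\mathrm{A}(\alpha)^\top g_\mathrm{B}(\alpha)$. Since two polynomials of bounded degree that agree at sufficiently many points (in particular at all of $\mathbb{F}$, with $|\mathbb{F}|$ large) are identical as polynomials, the operator choice produces the \emph{same} polynomial $M(x)$, with the same set of nonzero exponents and the same coefficients. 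In particular, by Proposition~\ref{prop:agent-bound}, the desired blocks $\mathrm{A}_i^\top\mathrm{B}_j$ sit at exponents $i-1+k(j-1)$ and every remaining coefficient carries at least one random mask, exactly as in Theorem~\ref{theorem:info-bound}.

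Next I would argue recovery (part (ii)). The reconstruction phase of Algorithm~\ref{algo:la-psmm} only uses the evaluations $M(\alpha_n)$ and Lagrange interpolation of $M(x)$. Because the number of distinct nonzero exponents of $M(x)$ is invariant under the operator choice, the number of evaluations needed to interpolate all coefficients is unchanged, so the agent threshold $N^\star(k,t)$ from Theorem~\ref{theorem:info-bound} and bound~\eqref{eq:info-bound} carry over verbatim. No new obstruction arises here, since interpolation depends solely on the exponent support, which we have just shown is preserved.

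For privacy (part (i)), I would note that the adversary's view is determined entirely by the shares distributed in the \emph{sharing} phase, namely the evaluations $g_\mathrm{A}(\alpha_n),g_\mathrm{B}(\alpha_n)$ held by the (at most) $t-1$ colluding agents. The local-multiplication step is a deterministic post-processing applied \emph{by the agents themselves} to data they already hold; by the data-processing inequality it cannot create any dependence on the secrets that was not already present. Since the sharing phase is identical to that of Theorem~\ref{theorem:info-bound}, Lemma~\ref{lemma:masking} applies unchanged: the $t-1$ masking evaluations are uniform and independent of $(\mathrm{A},\mathrm{B})$, so any $t-1$ colluders observe shares statistically independent of the secrets, giving perfect information-theoretic privacy.

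The only genuinely delicate point — and the step I expect to require care rather than heavy computation — is justifying that the \emph{as-polynomials} equality follows from pointwise agreement over a finite field. One must ensure the common degree bound is strictly below $|\mathbb{F}|$ so that the classical ``few roots'' argument applies and that the tensor-decomposition realization in~\eqref{eq:tensor-local} does not secretly inflate the degree of $M(x)$ beyond that of $g_\mathrm{A}(x)^\top g_\mathrm{B}(x)$; this is guaranteed because each $\langle u_r,\operatorname{vec}(g_\mathrm{A}(\alpha))\rangle$ and $\langle v_r,\operatorname{vec}(g_\mathrm{B}(\alpha))\rangle$ is a fixed linear functional of the respective encoded polynomial, so the product stays within the same degree envelope. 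Once this identification is secured, parts (i) and (ii) reduce to invoking Lemma~\ref{lemma:masking} and Proposition~\ref{prop:agent-bound} respectively, and the lemma follows.
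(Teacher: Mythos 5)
Your proof is correct and follows essentially the same route as the paper's: privacy via the observation that the local operator is deterministic post-processing of shares that are already independent of $(\mathrm{A},\mathrm{B})$ by the masking argument, and recovery via the fact that the controller receives exactly the same evaluations $g_\mathrm{A}(\alpha_n)^\top g_\mathrm{B}(\alpha_n)$, so the same interpolation succeeds with the same $N^\star(k,t)$. Your extra detour about pointwise agreement over $\mathbb{F}$ implying equality as formal polynomials is harmless but unnecessary, since the protocol only ever consumes the values $M(\alpha_n)$ at the chosen points, which are equal by hypothesis.
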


\begin{proof}
\label{lemma-proof:Operator-invariance}
Consider LA-PSMM with the matrices $\mathrm{A,B}\in\mathbb{F}^{m\times m}$ and any colluding agents set $\mathcal{T}\subseteq\mathbb{N}_1^N$ with $|\mathcal{T}|\leq t-1$. Agent $n$  receives $\big(g_\mathrm{A}(\alpha_n), g_\mathrm{B}(\alpha_n)\big)$ and then calculate the output $\hat{M}(\alpha_n)$. By assumption on the operator $M\in{\mathcal{M}}$, $\forall\alpha \in \mathbb{F}$, we have $\hat{M}(\alpha)=M(\alpha)=g_{A}(\alpha)^\top g_{B}(\alpha)$, thus, $\hat{M}(\alpha_n)$ is a deterministic function of the shares $\big(g_\mathrm{A}(\alpha_n), g_\mathrm{B}(\alpha_n) \big)$.

\noindent \textbf{From privacy perspective}: The joint distributed shares $\{g_\mathrm{A}(\alpha_{n'}), g_\mathrm{B}(\alpha_{n'}):{n'}\in \mathcal{T}\}$ are independent of $\mathrm{A,B}$ by the same masking arguments as in Theorem~\ref{theorem:info-bound} (the random masks provide prefect $t-1$ privacy). Since $\{\hat{M}(\alpha_{n'}):{n'}\in \mathcal{T}\}$ is obtained by deterministic post-processing of these shares, it cannot reveal additional information, hence, LA-PSMM is still perfectly private against any $t-1$ colluding agents.

\noindent \textbf{From recovery threshold perspective}: The controller receives $\{\hat{M}(\alpha_n):n\in\mathbb{N}_1^N\}=\{g_\mathrm{A}(\alpha_n)^\top g_\mathrm{B}(\alpha_n):n\in\mathbb{N}_1^N\}$, i.e., the same evaluation values of $M(x)=g_\mathrm{A}(x)^\top g_\mathrm{B}(x)$ as Theorem~\ref{theorem:info-bound}. Therefore, the same interpolation succeeds from the same number of agents, namely $N^\star(k,t)$, which still preserves the reconstruction. This completes the proof.
\end{proof}

\begin{algorithm}[t]
\caption{Learning-Augmented Perfectly Secure Matrix Multiplication (LA-PSMM)}
\label{algo:la-psmm}
\begin{algorithmic}[1]
\Require Matrices $\mathrm{A},\mathrm{B}\in \mathbb{F}^{m \times m}$, parameters $(k,t,N)$, learned decomposition $\{(u_r,v_r,w_r)\}_{r=1}^{T_l}$\;
\State Partition $\mathrm{A},\mathrm{B}$ into $k$ column blocks as in \eqref{eq:theo} \;
\State Perform Polynomial encodings $g_\mathrm{A},g_\mathrm{B}$ as in \eqref{eq:theo-polyA},~\eqref{eq:theo-polyB} with random masks $\mathrm{R}_\ell^{\mathrm{A}},\mathrm{R}_\ell^{\mathrm{B}}$ \;
\State Choose distinct public points $\alpha_1,\dots,\alpha_N \in \mathbb{F}$ \;
\State Send to each agent $n$ the evaluations $g_\mathrm{A}(\alpha_n),g_\mathrm{B}(\alpha_n)$ \;
\For{$n = 1 : N $}
\State Agent $n$ receives $g_\mathrm{A}(\alpha_n),g_\mathrm{B}(\alpha_n)$
\State \textbf{Initialize} $M(\alpha_n)\leftarrow\mathbf{0}^{(m/k)\times(m/k)}$
\For{$r = 1 : T_l $}
    \State $a_r\leftarrow\big\langle u_r, \operatorname{vec}(g_\mathrm{A}(\alpha_n)) \big\rangle$ 
    \State $b_r\leftarrow\big\langle v_r, \operatorname{vec}(g_\mathrm{B}(\alpha_n)) \big\rangle$ 
    \State $M(\alpha_n)\leftarrow M(\alpha_n)+(a_r\cdot b_r)\operatorname{mat}(w_r)$ 
\EndFor
\State Agent $n$ sends $M(\alpha_n)$ to the controller\;
\EndFor
\State Controller collects $\{M(\alpha_n):n\in\mathbb{N}_1^N\}$ \;
\State Controller solves the blockwise Vandermonde system induced by $\{\alpha_n\}$ to interpolate $M(x)=g_\mathrm{A}(x)^\top g_\mathrm{B}(x)$ \;
\State Extract coefficients $M_{i-1+k(j-1)} = \mathrm{A}_i^\top\mathrm{B}_j$ $\forall i,j\in\mathbb{N}_1^k$ \;
\State Stack $\{\mathrm{A}_i^\top\mathrm{B}_j\}$ to form $\mathrm{A}^\top\mathrm{B}$ \;

\Ensure Multiplication $\mathrm{A}^\top \mathrm{B}$ is reconstructed at the controller 
\end{algorithmic}
\end{algorithm}

\section{Numerical Results}
\label{sec:numerical}
In this section, we provide numerical simulations to support our theoretical framework.

\subsection{Comparisons with BGW-style job-splitting}
\label{App:compare-BGW}
We compare our PSMM scheme with a BGW-style job-splitting baseline under a per-agent storage limit $1/k$ and passive collusion of up to $t-1$ agents, with matrices $\mathrm{A, B}$ over a large prime field and sized $m=1024$. Our PSMM shows gains in terms of the number of agents required versus the storage split, and the per-agent communication near threshold. 
\subsubsection{Agents Required vs. Storage Split}
Fig.~\ref{fig:thresholds} shows the \textit{agents required} $N$ as a function of $k$ for several $t$ values. For our scheme, we use the proven threshold
\begin{align}
    N_{\text{ours}} (k,t)=\min\{\,2k^2+2t-3, k^2+kt+t-2\,\},
\label{eq:ourN}
\end{align}
while for BGW job-splitting we use $N_{\text{BGW}}(k,t)=k^2(2t-1)$.

\textbf{Insights-I:} 
(i) Across all $k$ and $t$, our $N_{\text{ours}}$ is \textit{strictly smaller} than $N_{\text{BGW}}$, with the gap widening as either $k$ or $t$ increases.  
(ii) The curvature shows two regimes in~(\ref{eq:ourN}): when $k<t$, $N$ follows $2k^2+2t-3$; when $k\ge t$, it follows $k^2+kt+t-2$, yielding even lower thresholds for larger $k$.  
(iii) Practically, this means the same privacy level ($t$) can be achieved with far fewer agents by using the structured \textit{zero-gap} polynomial sharing rather than naive job-splitting.

\begin{figure}[ht]
  \centering
  \includegraphics[width=0.8\linewidth]{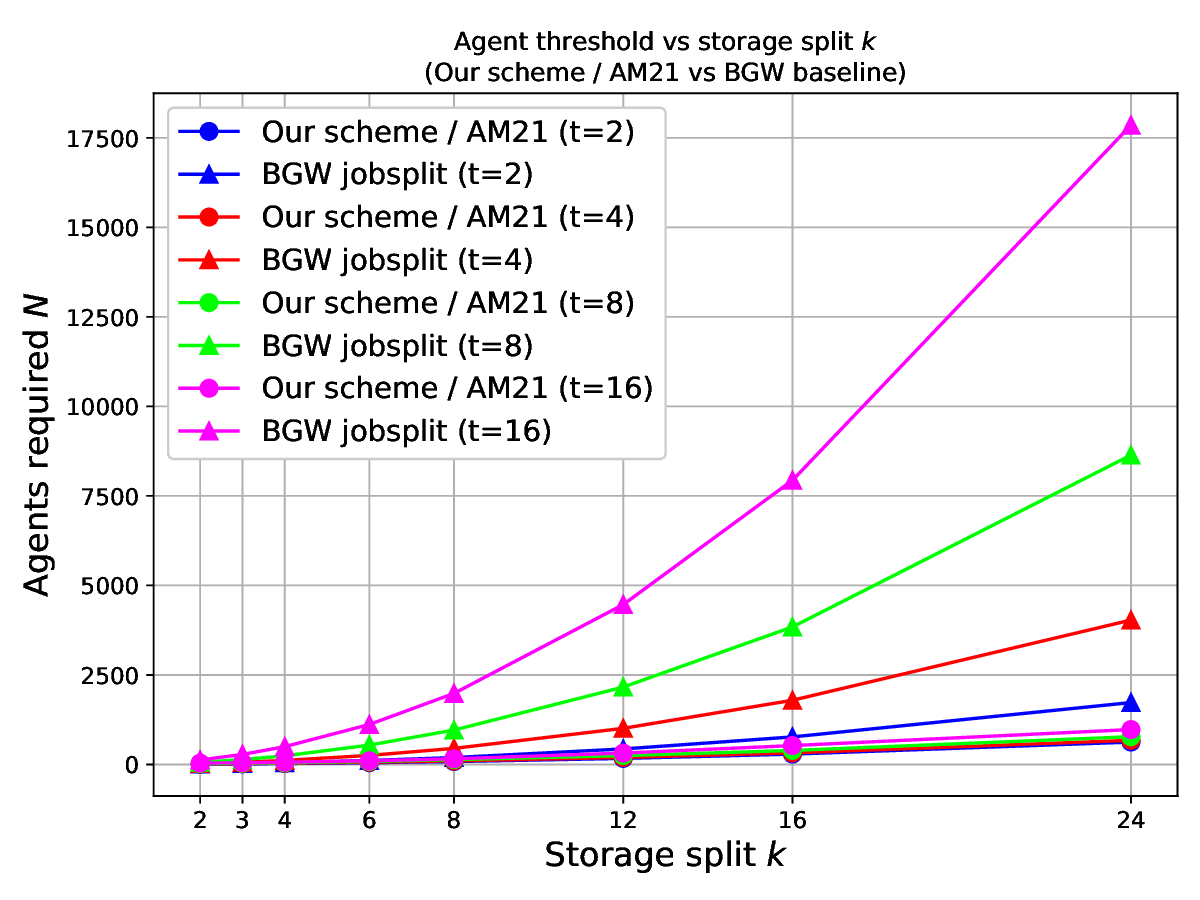}
  \caption{Agents required $N$ vs. storage split $k$ for multiple $t$.}
  \label{fig:thresholds}
\end{figure}

\begin{figure}[htp]
  \centering
  \includegraphics[width=0.8\linewidth]{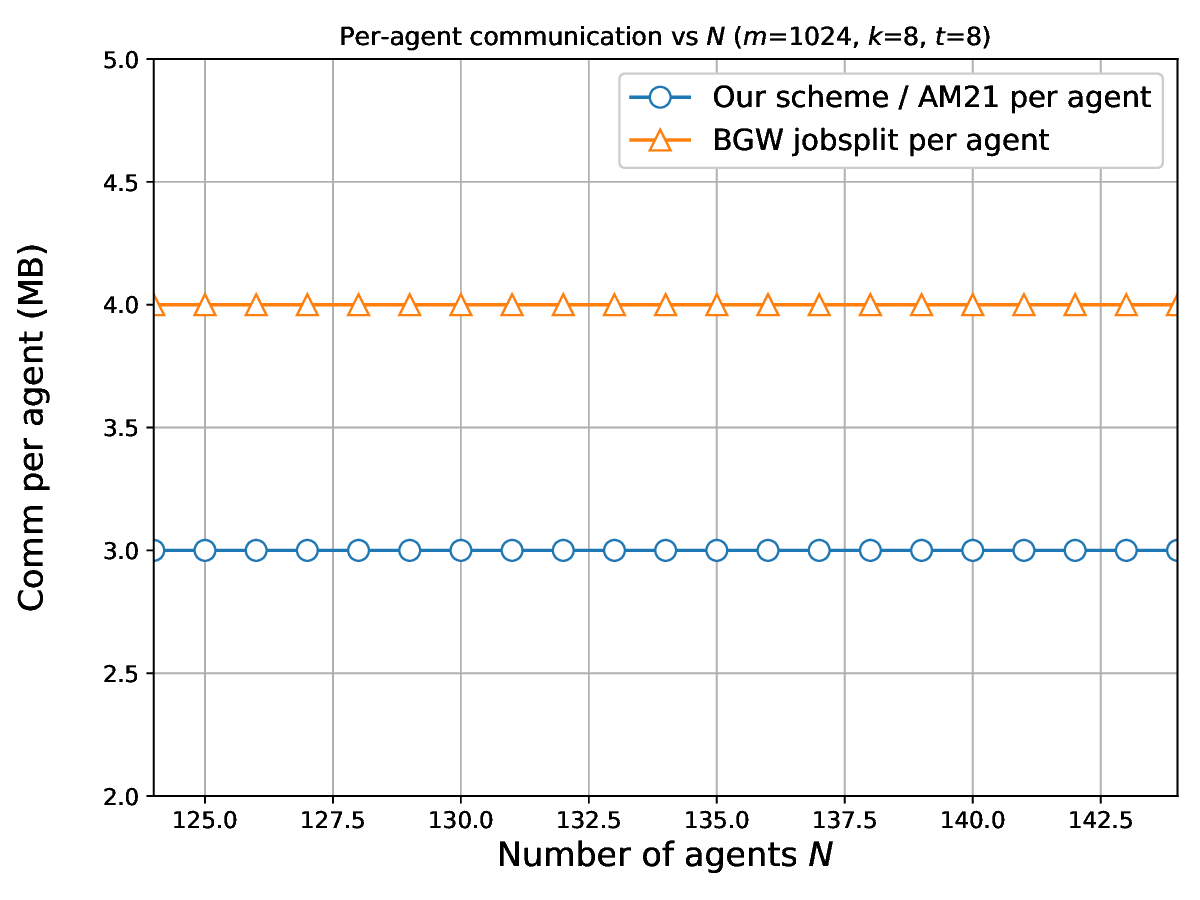}
  \caption{Per-agent communication vs.\ number of agents $N$ for $(m,k,t)=(1024,8,8)$. Our scheme requires fewer bytes per agent than BGW and also operates at a smaller $N$.}
  \label{fig:comm}
\end{figure}

\subsubsection{Per-agent Communication Near Threshold}
Fig.~\ref{fig:comm} depicts \textit{per-agent communication} (MB) versus $N$ for a representative configuration $(m,k,t)=(1024,8,8)$ around the threshold $N_{\text{ours}}(8,8)$. We count one upload of $(g_\mathrm{A}(\alpha_n),g_\mathrm{B}(\alpha_n))$ and one download of $M(\alpha_n)$ for our scheme. The BGW baseline requires more sub-shares/messages, modeled here as a larger constant factor.

\textbf{Insights-II:} (i) Communication per agent is essentially independent of $N$ for both schemes since each agent transmits a fixed-size payload; however, our scheme consistently requires fewer shares than BGW, leading to lower per-agent communication. (ii) Moreover, since our scheme operates with a smaller number of agents (Fig.~\ref{fig:thresholds}), the \textit{total} system communication $N\times\text{(per-agent)}$ of our scheme is correspondingly smaller. (iii) When combined with the fact that each agent only conducts a single local multiplication, this yields superior scalability in both communication and computation budgets.

\textbf{Takeaways}: 
The experiments corroborate the theory, namely, structured polynomial sharing with intentional exponent “gaps” compresses the set of nonzero product coefficients, lowering the interpolation/recovery threshold from $k^2(2t-1)$ (BGW job-splitting) to \eqref{eq:ourN}. In practice, this translates into \textit{fewer agents}, \textit{less communication}, and \textit{comparable per-agent compute}, making our PSMM substantially more efficient at the same information-theoretic privacy level.

\subsection{Comparison of computational time complexity}
We show that LA-PSMM has lower computational complexity than PSMM. In PSMM, each agent performs a coded dense multiplication between an $(s\times m)$ and an $(m\times s)$ matrix with computational complexity $\mathcal{O}(m^3/k^2)$ per agent and $\mathcal{O}(N m^3/k^2)$ total. In LA-PSMM, the dense product is replaced by a learned low-rank expansion of length $T_l$, where each agent evaluates two linear forms on vectors of length $ms=m^2/k$ and accumulates $T_l$ rank-1 terms into an $s\times s$ output with computational complexity $\mathcal{O}(T_l m^2/k)$ per agent and $\mathcal{O}(N T_l m^2/k)$ total, yielding a local computational reduction whenever an improvement when $T_l \ll m/k$. Encoding remains $\mathcal{O}(N(k+t)m^2/k)$, and decoding via entrywise interpolation is $\mathcal{O}((m/k)^2N^2)$ with a naive Vandermonde solver (or near-linear in $N$ per entry with fast interpolation).
\begin{figure}[h]
\centering
\includegraphics[width=0.8\linewidth]{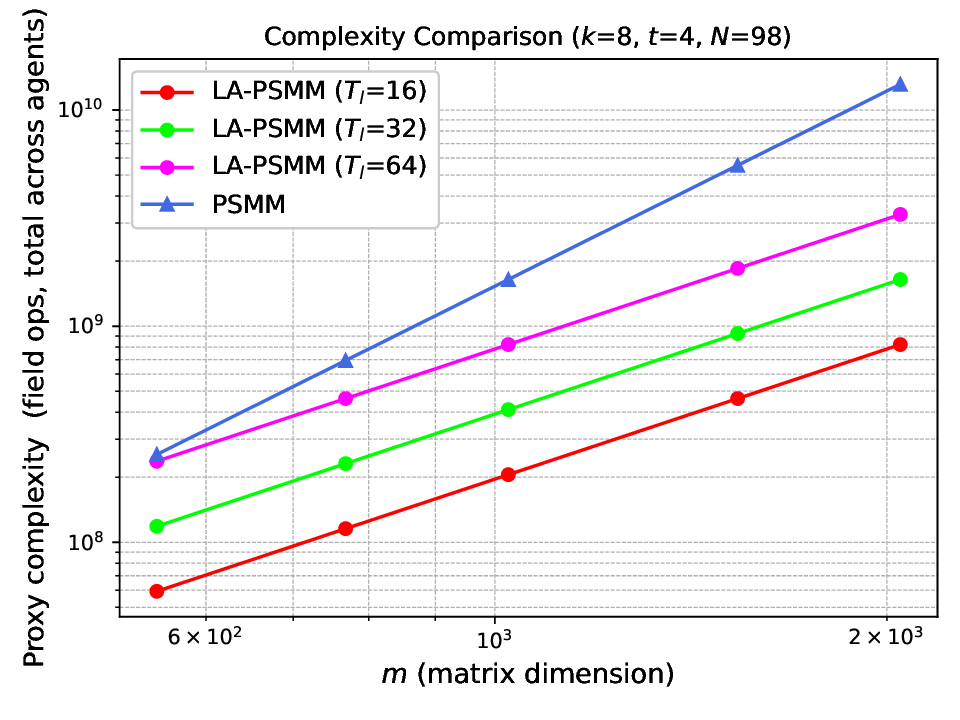}
    \caption{The agent computation complexity comparison of PSMM vs. LA-PSMM with different learned rank $T_l$.}
    \vspace{-0.1cm}
    \label{fig:compare-complexity}
\end{figure}

\begin{figure}[h]
    \centering
    \includegraphics[width=0.8\linewidth]{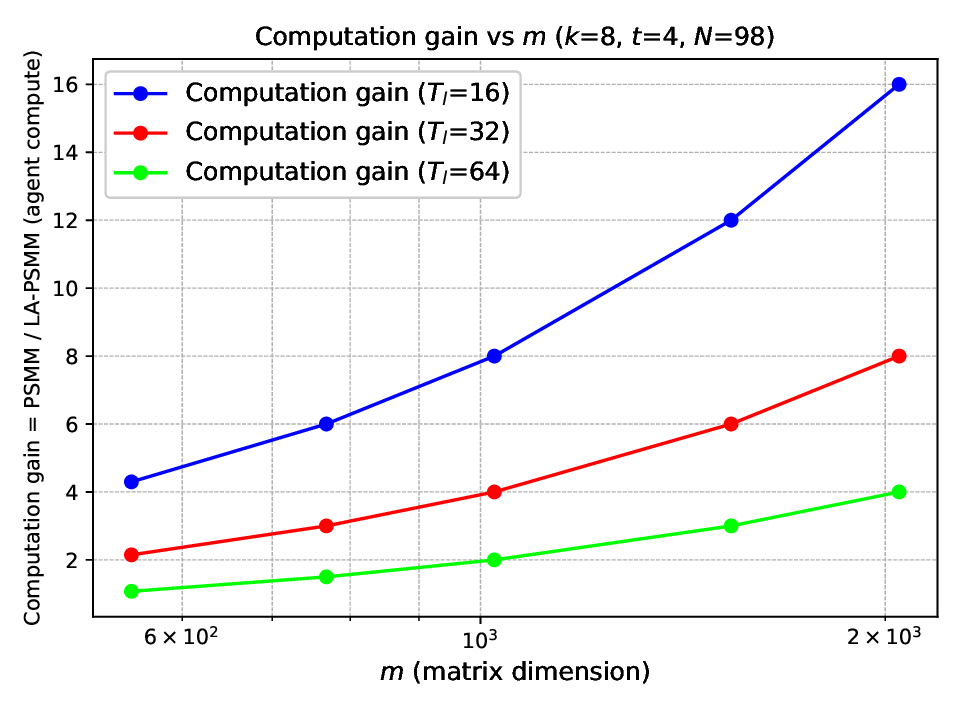}
    \caption{The agent computation gain of LA-PSMM with different learned rank $T_l$.}
    \vspace{-0.1cm}
    \label{fig:compare-gain}
\end{figure}

Fig.~\ref{fig:compare-complexity} depicts total computational complexity across agents for $k=8$, $t=4$, and $N=98$. LA-PSMM consistently requires less computation than PSMM, and the gap grows with dimension $m$ (up to $80\%$), indicating that learning-augmented local multiplication increasingly amortizes cost as problem size increases. Fig.~\ref{fig:compare-gain} further shows the computation gain, defined as the ratio of PSMM to LA-PSMM local cost, which rises monotonically with $m$, reflecting increasing benefit from the learning-based tensor decomposition. Overall, LA-PSMM delivers substantial, scalable savings in computation while preserving PSMM’s security guarantees.


\clearpage
\newpage

\IEEEtriggeratref{17}

\bibliographystyle{IEEEtran}
\bibliography{literature_conf}
\clearpage

\appendix
\subsection{Proof of Proposition~\ref{prop:beaver}}
\label{app:prop-proof:beaver}
Since $\mathrm{A} = \mathrm{R}_1 + \mathrm{D}$ and $\mathrm{B} = \mathrm{R}_2 + \mathrm{E}$ according to \eqref{eq:beaver-open}, multiplying them gives
$\mathrm{A}\mathrm{B} = (\mathrm{R}_1 + \mathrm{D})(\mathrm{R}_2 + \mathrm{E}) = \mathrm{R}_1\mathrm{R}_2 + \mathrm{D}\mathrm{R}_2 + \mathrm{R}_1\mathrm{E} + \mathrm{D}\mathrm{E}$ in \eqref{eq:beaver-compute}.
Since $\mathrm{R}_3 = \mathrm{R}_1\mathrm{R}_2$, the latter equals to $\mathrm{R}_3 + \mathrm{D}\mathrm{R}_2 + \mathrm{R}_1\mathrm{E} + \mathrm{D}\mathrm{E}$,
opening $\mathrm{D,E}$ reveals only uniformly masked differences. Hence, the private inputs remain hidden.
\subsection{Proof of Theorem~\ref{theorem:info-bound}}

\label{app:theorem-proof:info-bound}
Since each block $\mathrm{A}_i,\mathrm{B}_j$ is with shape $m\times (m/k)$, $g_\mathrm{A}(\alpha_n)$ and $g_\mathrm{B}(\alpha_n)$ are with $m\times (m/k)$, and each agent’s local multiplication $M(\alpha_n) = g_\mathrm{A}(\alpha_n)^\top g_\mathrm{B}(\alpha_n)$ is with $(m/k)\times (m/k)$.
For coefficient placement in $M(x)$, we define it as the summation form $M(x)=g_\mathrm{A}(x)^\top g_\mathrm{B}(x)=\sum_{\nu} \mathrm{M}_\nu x^\nu,~\nu\in\mathbb{N}_0$. Then from \eqref{eq:theo-polyA}-\eqref{eq:theo-polyB}, we have the transpose
\begin{align}
   g_\mathrm{A}(x)^\top=\sum_{i=1}^{k} \mathrm{A}_i^\top x^{i-1} + \sum_{\ell=1}^{t-1} \big(\mathrm{R}^{(\mathrm{A})}_\ell\big)^\top x^{k^2+\ell-1}.
\end{align}
Multiplying the structured non-random parts yields the $k^2$ target block multiplications at exponents
\begin{align}
    \mathcal{K}_1 = \big\{(i-1) + k(j-1):\ i,j\in\mathbb{N}^k\big\} = \{0,1,\ldots,k^2-1\}.
\label{eq:K1}
\end{align}
Thus for each $(i,j)$, the coefficient $\mathrm{M}_{i-1+k(j-1)}$ contains exactly $\mathrm{A}_i^\top\mathrm{B}_j$. Cross-terms that involve one non-random part and one random mask compose
\begin{align}
    \mathcal{K}_2 = \{k^2,\,k^2+1,\ldots,k^2+k+t-3\},
\label{eq:K2}
\end{align}
from the below multiplication
\begin{align*}
    \Big(\textstyle\sum_i \mathrm{A}_i^\top x^{i-1}\Big)\Big(\textstyle\sum_\ell \mathrm{R}^{(\mathrm{B})}_\ell x^{k^2+\ell-1}\Big).   
\end{align*} 
and similarly for other indices as 
\begin{align}
    \mathcal{K}_3 = \{k^2 + ik + j:\ i\in\mathbb{N}_0^{k-1},~j\in\mathbb{N}_0^{t-2}\},
\label{eq:K3}
\end{align}
we derive the degrees from the following set
\begin{align*}
    \Big(\textstyle\sum_\ell (\mathrm{R}^{(\mathrm{A})}_\ell)^\top x^{k^2+\ell-1}\Big)\Big(\textstyle\sum_j \mathrm{B}_j x^{k(j-1)}\Big).
\end{align*}
Finally, masks times masks create a high-degree band as: 
\begin{align}
    \mathcal{K}_4 = \{2k^2,\,2k^2+1,\ldots,2k^2+2t-4\}.
\label{eq:S4}
\end{align}
Consequently, every index in $\mathcal{K}_2\cup \mathcal{K}_3\cup \mathcal{K}_4$ carries at least one random block and thus acts only as \textit{noise} for privacy, while $\mathcal{K}_1$ holds all desired $\mathrm{A}_i^\top \mathrm{B}_j$.

We next focus on the minimum number of agents needed given the yielding $\mathcal{K}_1,\mathcal{K}_2,\mathcal{K}_3,\mathcal{K}_4$ and the sparsity of $M(x)$.

\noindent\textbf{Counting nonzero coefficients}:
The degree pattern implies that the set of \textit{distinct} exponents with nonzero coefficients in $M(x)$ is contained in $\mathcal{K}_1\cup \mathcal{K}_2\cup \mathcal{K}_3\cup \mathcal{K}_4$. A careful overlap analysis standard in polynomial-sharing arguments shows that the total number of occupied exponents is
\begin{align}
    |\operatorname{supp}(M)| \leq \min\{2k^2+2t-3,  k^2+kt+t-2\}, 
\end{align}
where $\operatorname{supp}(M)$ denotes the operator counting the non-zero elements in $M(x)$.

\noindent\textbf{Intuition}: $\mathcal{K}_1$ contributes $k^2$ consecutive indices, $\mathcal{K}_2$ contributes a contiguous run of length $(k+t-2)$ starting at $k^2$; $\mathcal{K}_3$ contributes $k(t-1)$ indices spaced by $k$ (some may fall inside $\mathcal{K}_2$ depending on $k$ vs.\ $t$), and $\mathcal{K}_4$ contributes a final run of length $(2t-3)$ starting at $2k^2$. When $k<t$, the overlaps are smaller, yielding the $2k^2+2t-3$ regime; when $k\ge t$, more $\mathcal{K}_3$ indices merge into the $\mathcal{K}_2$ band, which offers the tighter $k^2+kt+t-2$ count.

\noindent\textbf{Interpolation requirement and choice of $N$}: Let $K = |\operatorname{supp}(M)|$ denote the number of unknown coefficients in $M$. Suppose we observe $N\geq K$ distinct evaluations $\{M(\alpha_n)\}: n\in\mathbb{N}_1^{N}$ at pairwise distinct points $\alpha_n\in\mathbb{F}$. Each evaluation yields a linear equation in the unknown coefficients $\{\mathrm{M}_\nu:\nu\in\operatorname{supp}(M)\}$, and collecting all $N$ evaluations gives a blockwise (entrywise) Vandermonde linear system.

\noindent For a Zariski-open set of evaluation points $(\alpha_1,\ldots,\alpha_N)$, this Vandermonde system has full rank~\cite{weil1946foundations}, and hence admits a unique solution. When the underlying field $\mathbb{F}$ is sufficiently large, the Schwartz–Zippel lemma implies that choosing the $\alpha_n$ independently and uniformly at random ensures invertibility with probability at least $1-\mathcal{O}(1/|\mathbb{F}|)$. Consequently, $N=M$ evaluations are sufficient to recover all coefficients $\mathrm{M}_\nu$ exactly.

\noindent In particular, this allows recovery of the $k^2$ target blocks in $\mathcal{K}_1$, namely $\mathrm{A}_i^\top\mathrm{B}_j$, for $i,j\in\mathbb{N}^k$. Stacking these blocks yields the desired multiplication $\mathrm{A}^\top\mathrm{B}$.

\noindent \textbf{Privacy and correctness}:
By construction, any set of $t-1$ agents sees only evaluations of low-degree random polynomials in the masked positions. By the masking lemma (uniformity of evaluations at $t-1$ points), these are independent of the inputs, so nothing about $\mathrm{\mathrm{A,B}}$ is leaked beyond the final output. Correctness follows because interpolation recovers the exact coefficients in $\mathcal{K}_1$. If Beaver triples are used for local multiplications, Proposition~\ref{prop:beaver} ensures each local multiplication and the recombination are algebraically correct over $\mathbb{F}$. Combining the counting bound for $K$ with the interpolation argument proves the claimed threshold in \eqref{eq:info-bound} that gives
\begin{align*}
  N \leq \min\{2k^2+2t-3,~k^2+kt+t-2\}. 
\end{align*}
Now for the privacy part, we can continue as follows:

\noindent \textbf{Mask polynomials}:
Recall that we split $\mathrm{A,B}$ into $k$ partitioned blocks of size $m\times(m/k)$, and encode them in \eqref{eq:theo-polyA}, \eqref{eq:theo-polyB} with random masks $\mathrm{R}^{(\mathrm{A})}_\ell,\mathrm{R}^{(\mathrm{B})}_\ell\in\mathbb{F}^{m\times(m/k)}$ as in \eqref{eq:theo}.
Each agent $n$ then receives the evaluations $g_\mathrm{A}(\alpha_n),g_\mathrm{B}(\alpha_n)$ and computes $g(\alpha_n)= g_\mathrm{A}(\alpha_n)^\top g_\mathrm{B}(\alpha_n)$.
By defining the masking components as
\begin{align}
    r_\mathrm{A}(x) &= \sum_{\ell=1}^{t-1} \mathrm{R}^{(\mathrm{A})}_\ell\,x^{\ell-1}, 
\label{eq:rA}\\
    r_\mathrm{B}(x) &= \sum_{\ell=1}^{t-1} \mathrm{R}^{(\mathrm{B})}_\ell\,x^{\ell-1}, 
\label{eq:rB}
\end{align}
the multiplicated polynomials can then be written as structured parts plus $x^{k^2}r_\mathrm{A}(x)$ and $x^{k^2}r_\mathrm{B}(x)$ as shown in \eqref{eq:theo-polyA} and \eqref{eq:theo-polyB}.

\noindent{\bf Privacy argument}:
For any coalition $\mathcal{S}$ of at most $t-1$ agents, the joint distribution
\begin{align}
\label{eq-joiny}
    \big\{g_\mathrm{A}(\alpha_n),g_\mathrm{B}(\alpha_n): n\in\mathcal{S}\big\},
\end{align}
is uniform and independent of $\mathrm{A, B}$, since by~(\ref{eq:rA})-(\ref{eq:rB}) the coalition only sees at most $t-1$ evaluations of degree-$\leq t-2$ random polynomials, which by the Shamir secret sharing property are information-theoretically independent of the secrets. 
Therefore, the scheme inherits privacy directly from the \textit{secret sharing model}. {This completes the proof.}


\subsection{Strassen Matrix Multiplication}
\label{app:str-mat-mult}
For completeness, we summarize the classical Strassen algorithm~\cite{strassen:1969}, which forms the basis for our low-rank learned block multiplication in Sec.~\ref{sec:learning-mul}. Strassen’s method replaces the standard $2\times 2$ block MM, which requires $8$ scalar multiplications, with only $7$ bilinear forms.

\noindent{\bf Formulation}:
Let we have two matrices $\mathrm{A}$ and $\mathrm{B}$ as follows
\begin{align}
    \mathrm{A} = 
    \begin{bmatrix}
    a_{11} & a_{12}\\
    a_{21} & a_{22}
    \label{eq:app-matrix}
    \end{bmatrix},
    \qquad
    \mathrm{B} = 
    \begin{bmatrix}
    b_{11} & b_{12}\\
    b_{21} & b_{22}
\end{bmatrix}, 
\end{align}
where each $a_{ij}, b_{ij}\in\mathbb{F}$ are scalar or block entries (depending on the recursive level). The product $\mathrm{C}=\mathrm{A}\mathrm{B}$ is given by
\begin{align}
    \mathrm{C} =
    \begin{bmatrix}
    c_{11} & c_{12}\\
    c_{21} & c_{22}
\label{eq:app-matrix-result}
\end{bmatrix}.
\end{align}

Strassen observed that $\mathrm{C}$ can be computed using the following seven
intermediate products:
\begin{align}
    {\begin{cases}
    F_1 &= (a_{11} + a_{22})(b_{11} + b_{22}), \nonumber\\
    F_2 &= (a_{21} + a_{22})\, b_{11}, \nonumber\\
    F_3 &= a_{11} \,  (b_{12} - b_{22}), \nonumber\\
    F_4 &= a_{22} \,  (b_{21} - b_{11}), \nonumber\\
    F_5 &= (a_{11} + a_{12})\,  b_{22}, \nonumber\\
    F_6 &= (a_{21} - a_{11})\,  (b_{11} + b_{12}), \nonumber\\
    F_7 &= (a_{12} - a_{22})\,  (b_{21} + b_{22}).\nonumber
    \end{cases}}
\end{align}

The final result is reconstructed as
\begin{align}
    c_{11} &= F_1 + F_4 - F_5 + F_7, \nonumber\\
    c_{12} &= F_3 + F_5, \nonumber\\
    c_{21} &= F_2 + F_4, \nonumber\\
    c_{22} &= F_1 - F_2 + F_3 + F_6. 
\label{eq:strassen-recomb}
\end{align}

\textbf{Remarks}:
\begin{itemize}
    \item The algorithm replaces $8$ scalar (or block) multiplications with $7$, at the expense of additional additions/subtractions.
    \item When applied recursively, the asymptotic complexity reduces from
$\mathcal{O}(n^3)$ to $\mathcal{O}(n^{\log_2 7}) \approx \mathcal{O}(n^{2.807})$.
\item  In the finite-field setting $\mathbb{F}_p$, all operations are
performed modulo $p$, which preserves the correctness of \eqref{eq:strassen-recomb}.
\item   In our LA-PSMM implementation, we employ the blockwise form of Strassen's relations to construct the low-rank bilinear maps
$(u_r,v_r,w_r)$ for $r\in\mathbb{N}_1^7$ in \eqref{eq:agent-learned}.
\end{itemize}

\paragraph*{Tensor Form (Low-Rank Interpretation)}
From \eqref{eq:strassen-recomb}, we can equivalently represent the multiplication as a
rank-$7$ decomposition of the $2\times2$ matrix multiplication tensor:
\begin{align}
    \operatorname{vec}(\mathrm{C}) = \sum_{r=1}^{7}
    \langle u_r, \operatorname{vec}(\mathrm{A})\rangle
    \langle v_r, \operatorname{vec}(\mathrm{B})\rangle
    \operatorname{vec}(w_r),
\end{align}
where the vectors $\{u_r,v_r,w_r\}$ encode the additive structure of
$\{F_r:r\in\mathbb{N}_1^N\}$ above.
This tensor view underlies both Strassen’s method and
modern learning-based low-rank decompositions such as
AlphaTensor~\cite{fawzi:2022matrixRL}.

\subsection{Proof of Proposition~\ref{prop:adv-dof}}
\label{App:Proof-prop:adv-dof}
We outline the argument in the same coefficient-support framework used in Theorem~\ref{theorem:info-bound}.

\noindent{\bf Step 1} (same encoding and privacy):
We use the same polynomials $g_\mathrm{A},g_\mathrm{B}$ as in \eqref{eq:theo-polyA}--\eqref{eq:theo-polyB}, with $t-1$ random mask blocks placed in the high-degree tail.
Therefore, for any coalition $\mathcal{S}$ with $|\mathcal{S}|\leq t-1$, the view
$\{g_\mathrm{A}(\alpha_n),g_\mathrm{B}(\alpha_n):n\in\mathcal{S}\}$ is statistically independent of $(\mathrm{A,B})$ by Lemma~\ref{lemma:masking}.
Hence, perfect privacy remains unchanged and holds under the same $t$.

\noindent{\bf Step 2} (support and interpolation in the worst case):
Let $M(x)=g_\mathrm{A}^\top(x) g_\mathrm{B}(x)=\sum_{\nu} \mathrm{M}_\nu x^\nu$.
Theorem~\ref{theorem:info-bound} identifies a set $\mathcal{K}_1$ in \eqref{eq:K1} of $k^2$ exponents corresponding to the $k^2$ target blocks
$\mathrm{M}_{i-1+k(j-1)}=\mathrm{Z}_{i,j}$ as defined in \eqref{eq:taget-block}, while all remaining exponents in $\mathcal{K}_2\cup \mathcal{K}_3\cup \mathcal{K}_4$ contain at least one mask.
In the worst case, recovering all $k^2$ unknown blocks plus the masked coefficients requires
$N \ge |\operatorname{supp}(M)| \leq N^\star(k,t)$ evaluations.

\noindent{\bf Step 3}  (structural reduction of unknowns):
Under the DOF assumption in \eqref{eq:dof-model}, the set of $k^2$ target blocks
$\{\mathrm{Z}_{i,j}\}$ defined in \eqref{eq:taget-blocks} is not an arbitrary element of
$\big(\mathbb{F}^{(m/k)\times(m/k)}\big)^{k^2}$, but lies in a known $s$-dimensional linear subspace.
Equivalently, there exist $s$ unknown \textit{latent} blocks (the $\mathcal{L}_\ell(\mathrm{A,B})$ values) from which all $\mathrm{Z}_{i,j}$ are deterministically recoverable via known linear combinations.
Thus, it suffices to recover these $s$ latent blocks (and not all $k^2$ coefficients in $\mathcal{K}_1$) to reconstruct $\mathrm{A}^\top\mathrm{B}$.

\noindent{\bf Step 4} (decoding only $s$ effective coefficients):
Since the coefficient placement is fixed and known, we may replace the full coefficient vector
$\{\mathrm{M}_\nu:\nu\in \operatorname{supp}(M)\}$ by a reduced unknown vector consisting of:
(i) the $s$ latent blocks determining all $\mathrm{Z}_{i,j}$, and
(ii) the same masked coefficients required for privacy (which are independent noise terms).
The interpolation system remains linear in these unknowns; therefore, the required number of evaluations scales with the number of unknown blocks being solved for.
Replacing $k^2$ independent target blocks by $s$ latent blocks yields the count in \eqref{eq:adv-newN} by repeating the same overlap/support counting argument from Theorem~\ref{theorem:info-bound}, with $k^2$ replaced by $s$ in the structured band.

\noindent{\bf Step 5} (strict advantage when $s<k^2$):
When $s<k^2$, the reduced threshold in \eqref{eq:adv-newN} is strictly smaller than $N^\star(k,t)$, hence for a fixed $N$ we can tolerate a larger $t-1$ collusion level while maintaining perfect privacy as shown in \eqref{eq:tolerable-collusion}.
This completes the proof.

\end{document}